\renewcommand{\@biblabel}[1]{\quad#1.}
\newcommand{\bs}[1]{\ensuremath{\boldsymbol{#1}}}
\newcommand{\tp}{{\!\scriptscriptstyle \top}}
\newtheorem{prop}{Proposition}
\definecolor{Gray}{gray}{.25}
\begin{document}
\vspace*{0.35in}

\begin{flushleft}
{\Large
\textbf\newline{Simple robust genomic prediction and outlier detection for a multi-environmental field trial.}
}
\newline
\\
Emi Tanaka\textsuperscript{1,*}
\\
\bigskip
\bf{1} School of Mathematics and Statistics, The University of Sydney, NSW, Australia, 2006
\\
\bigskip
* emi.tanaka@sydney.edu.au

\end{flushleft}

\section*{Abstract}
The aim of plant breeding trials is often to identify germplasms that are well adapt to target environments. These germplasms are identified through genomic prediction from the analysis of multi-environmental field trial (MET) using linear mixed models. The occurrence of outliers in MET are common and known to adversely impact accuracy of genomic prediction yet the detection of outliers, and subsequently its treatment, are often neglected. A number of reasons stand for this - complex data such as MET give rise to distinct levels of residuals and thus offers additional challenges of an outlier detection method and  many linear mixed model software are ill-equipped for robust prediction. We present outlier detection methods using a holistic approach that borrows the strength across trials. We furthermore evaluate a simple robust genomic prediction that is applicable to any linear mixed model software. These are demonstrated using simulation based on two real bread wheat yield METs with a partially replicated design and an alpha lattice design. 


\section{Introduction \label{intro}}
Multi-environmental trials (METs) are routinely analysed for the evaluation and selection of the best genotypes. These MET data are commonly analysed by linear mixed models with a particular interest in accurate prediction of the main genotype effects or genotype $\times$ environment (G$\times$E) interaction effects. It is widely accepted to use empirical best linear unbiased predictions (E-BLUPs) of G$\times$E effects for the aim of  selection \citep{Robinson1991}. The E-BLUP, however, are sensitive to the presence of outliers resulting in lower accuracy of the prediction of G$\times$E effects \citep{Estaghvirou2014}. Despite the common occurrence of outlying observations (as shown in Figure~\ref{fig:boxplot}), many linear mixed model software are not equipped with robust parameter estimation as well as robust prediction and often rely on the input of a well behaved data. The user may identify potential outliers in a pre-processing step and exclude these from the analysis, however, the exclusion of potential ``real'' observations to fit the model is discouraged. 

\begin{figure}[ht!]
	\centering\includegraphics[width=\textwidth]{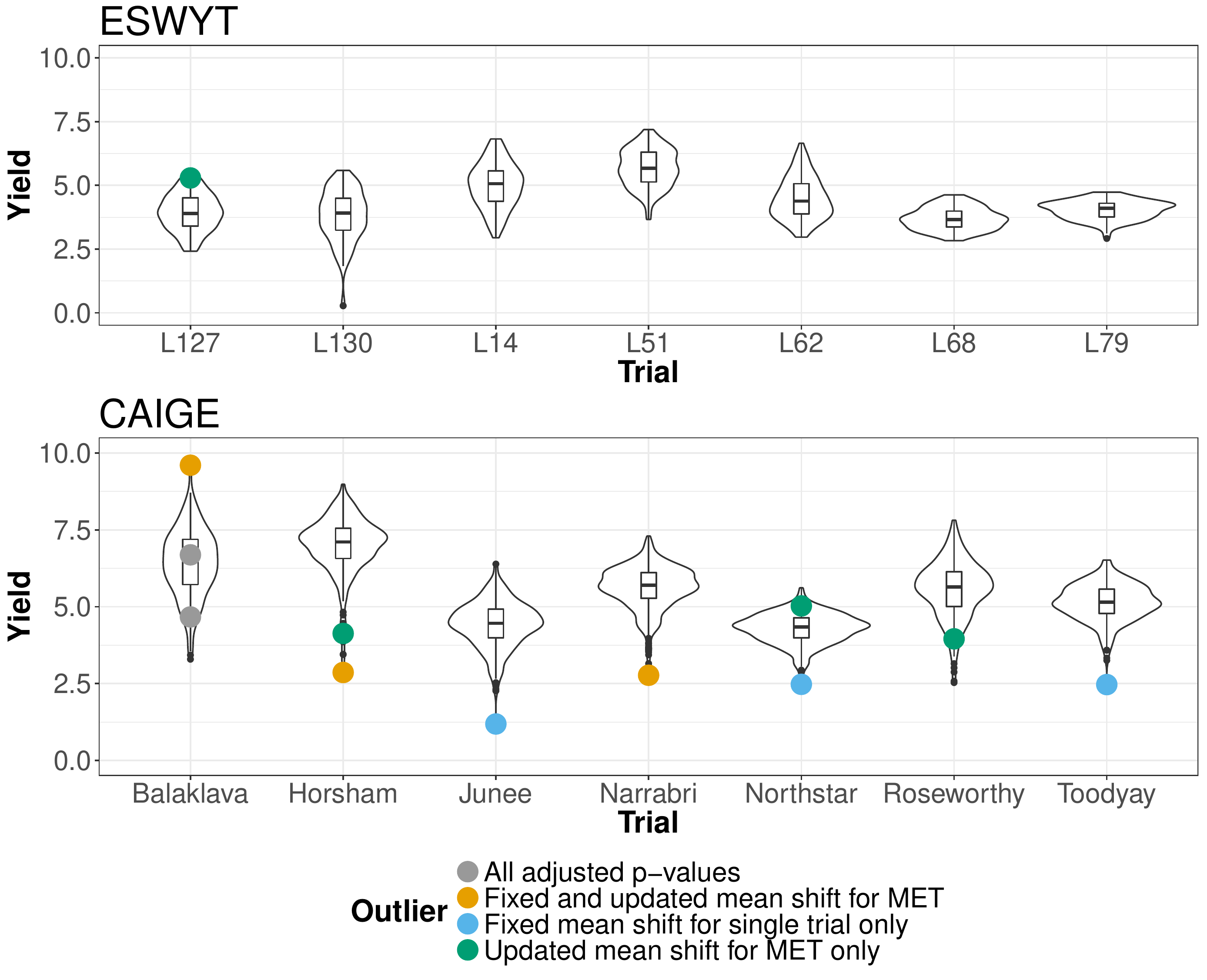}
	\caption{Boxplot of the yield embedded within the violin plot by trial for ESWYT and CAIGE show potential outliers. Further scrutiny of data identify outliers using adjusted $p$-values (Section~\ref{sec:qval}) of mean shift estimated via three MSOM: 1) $t_i$ with baseline model based on single trial analysis; 2) $t_i$ from baseline model as the MET model; and 3) $\hat{\phi}_i$ from MSOM with baseline model as the MET model. The coloured points identify observations flagged as outliers according to the three adjusted $p$-values.}
	\label{fig:boxplot}.
\end{figure}

Outliers give rise to two distinct concepts that are often conflated: outlier detection and  robust estimation. Outlier detection is an important pre-processing step to identify dubious data that may be the result of errors (e.g. transcription error) or other sources that have influenced the observation. Carrying out this step soon after the experiment offers a chance to rectify errors or to identify unexpected factors influencing the observation(s). The information from outlier detection may be carried over for a robust estimation such as by case-deletion or weighted analysis although this is not necessary for some class of robust estimators \citep[e.g. M-estimators, ][]{Huber1981}.

Outlier detection are generally conducted through a form of residual analysis \citep{Schutzenmeister2012} or sensitivity analysis when observations are perturbed or deleted. For single outliers, \cite{Cook1977} proposed the use of a model-based statistics, now widely known as Cook's distance, based on a confidence ellipsoid for the fixed effects. A generalised form of the Cook's distance (GCD) for the linear mixed model was used by \cite{Christensen1992}. Alternative outlier measures were considered by Cook in the linear fixed model, namely, the mean shift outlier model \citep[MSOM,][]{Cook1982a} and the variance shift outlier model \citep[VSOM,][]{Cook1982} where one (or more) observation(s) is considered to arise from a shifted location or inflated variance, respectively, compared to the rest of the observations. For the VSOM, \cite{Cook1982} considered the maximum likelihood estimate of the variance components while the residual maximum likelihood estimate of the variance components \citep{Patterson1971} was considered by \cite{Thompson1985} with further development given in \cite{Gogel1997, Gumedze2008, Gumedze2010} for the linear mixed model.  \cite{Gogel1997, Gumedze2008, Gumedze2010} also considered a more general case of MSOM and VSOM where a (known) group of observations arise from shifted location or inflated variance. The aforementioned methods are described in-depth in Section~\ref{sec:stat}. \cite{Bernal-Vasquez2016} recently considered outlier detection in MET but they did not consider VSOM nor GCD and no explicit connection of studentised conditional residual to MSOM was made.

The main contributions of this article are as follows. We propose outlier detection methods using a holistic approach by borrowing strength across trials and are computationally efficient for practical use. We show empirically that this increases discrimination power and present real examples with potential benefit to our holistic approach. We propose a simple robust genomic prediction for METs and emphasise that this simple robust genomic prediction can be easily incorporated in any linear mixed model software that allows the practitioner to use their preferred software. The effectiveness of the methods are evaluated by simulation from two real bread wheat (\textit{Triticum aestivum}) yield METs that employ $\alpha$-lattice and partial replicated designs. 

\section{Materials and methods}\label{sec:method}
\subsection{Data}
We consider the use of two bread wheat  yield multi-environmental trial data described in Section~\ref{sec:ESWYT} and \ref{sec:CAIGE} which will be referred as ESWYT and CAIGE henceforth. Note that we use environment to be synonymous with a single field trial.

\subsubsection{35th Elite Spring Wheat Yield Trial (ESWYT)} \label{sec:ESWYT}

The 35th Elite Selection Wheat Yield Trial (ESWYT) contains 78 trials testing 50 spring bread wheat germplasm adapted to optimally irrigated, low rainfall areas distributed by CIMMYT to over 20 countries \citep{D35ESWYT}. All trials consist of 100 plots and employ $\alpha$-lattice designs \citep{Patterson1976} with two replication of the 50 germplasms, each replication consisting of 10 blocks of size 5. We select seven trials that correspond to ID 127, 130, 14, 51, 62, 68, and 79 in the ``Occ'' column. These seven were chosen from a pool of trials that have at least a ratio of 1.5 for genotype variance to residual variance based on a single trial analysis.  

\subsubsection{2016 Bread Wheat Yield Trial (CAIGE)}\label{sec:CAIGE}
CIMMYT Australia ICARDA Germplasm Evaluation (CAIGE) project aims to evaluate the performance of international germplasms for local adaptation. The CAIGE bread wheat yield trial in 2016 \citep{CAIGE2016} was conducted at 8 locations (Cadoux, Horsham, Toodyay, Roseworthy, Northstar, Narrabri, Junee and Balaklava) within Australia. The Cadoux site suffered from extensive frost damage and no significant yield information is available. There were 240 germplasms tested across all seven trials with 252-391 plots. Each trial employed a partially replicated ($p$-rep) design \citep{Cullis2006} with two blocks and $p$ ranging from 0.23 to 0.39.

\subsection{Statistical model}
We consider the analysis of a MET data in which $m$ genotypes are grown in $t$ trials (not all genotypes are grown in each trial necessary). Let $\bs{y}_j$ denote a vector of length $n_j$ of phenotypic data for the $j$-th trial for $j=1, \ldots, t$; let $n=\sum_{j=1}^{t} n_j$ be the total number of observations and $\bs{y} = (\bs{y}_1^\tp, \ldots, \bs{y}_t^\tp)^\tp$ be the vector of all phenotypic data. 

In general, we model the MET data $\bs{y}$ as 
\begin{equation}
\bs{y} = \bs{X}_0\bs{\tau}_0 + \bs{Z}_{ge}\bs{u}_{ge} + \bs{Z}_p\bs{u}_p + \bs{e}
\end{equation} 
where $\bs{X}_0$ is the design matrix associated with fixed effects $\bs{\tau}_0 $ that include overall mean and trial effects, $\bs{Z}_{ge}$ is the design matrix associated with random G$\times$E effects $\bs{u}_{ge}$, $\bs{Z}_p$ is the design matrix associated with trial specific random peripheral effects (such as block effects) and $\bs{e}$ is the $n\times 1$ vector of random error. Note that the blocking terms are always included. We assume that $\bs{u}_g$, $\bs{u}_p$ and $\bs{e}$ are independently generated from Gaussian distributions with zero mean and variance matrices $\text{var}(\bs{u}_{ge}) = \bs{G}_{ge}$, $\text{var}(\bs{u}_p) = \bs{G}_p$ and $\text{var}(\bs{e}) = \bs{R}$ (see Section~\ref{sec:varstruc} for specific variance structure used).

\subsubsection{Spatial modelling}

The analysis of a single trial is considered first to identify extraneous variation or spatial trends as described in \citet{Gilmour1997, Stefanova2009} and this is included in either $\bs{\tau}_0$ or $\bs{u}_p$. Spatial modelling was only conducted for CAIGE data as the the spatial coordinates of the plot are not available for ESWYT. The fitted models and its variance estimates for both data are shown in Appendix Table~\ref{tab:var} and \ref{tab:vargxe}.

\subsubsection{Variance structures}\label{sec:varstruc}
For $\bs{G}_{ge}$, a multiplicative structure is assumed $\bs{G}_{ge} = \bs{G}_e \otimes \bs{G}_g$ where $\bs{G}_g$ is a $m\times m$ genotype relationship matrix, assumed in this article to be the identity matrix $\bs{I}_m$, and $\bs{G}_e$ is the $t\times t$ genotype covariance between environment. The structure of $\bs{G}_e$ may be completely general however for large number of environments, a lower order approximation via factor analytic structure \citep{Smith2015a} may only be feasible or more attractive to fit the model. In this article we assume that $\bs{G}_e$ is an unstructured matrix for the MET analysis. 

For this article $\bs{u}_p$ all correspond to trial specific blocking effects such as blocks, replicate, row or column effects. More specifically, suppose that \texttt{Trial}, \texttt{Block}, \texttt{Replicate}, \texttt{Row} and \texttt{Column} are factors that index the trial, block, replicate, row and column, respectively, then following the model syntax of \cite{Wilkinson1973}, the terms in $\bs{u}_p$ include \texttt{Trial|Replicate|Block} and \texttt{Trial|Replicate} for ESWYT and \texttt{Trial|Block}, \texttt{Trial|Row} and \texttt{Trial|Column} for CAIGE. Note this can be completely general to include other non-genetic peripheral effects  as appropriate such as spline effects although the aforementioned effects were deemed sufficient.  The corresponding variance $\bs{G}_p$ is the direct sum of scaled identity matrix $\oplus_{k=1}^q\sigma^2_{pk}\bs{I}_{n_{pk}}$ where $q$ correspond to the number of components in $\bs{u}_p$ and $n_{pk}$ is commensurate with the size of the corresponding random component.  

The structure of $\bs{R}$ is assumed as a direct sum of error variance of each trial, i.e. $\oplus_{j=1}^t \bs{R}_j$. We may further assume $\bs{R}_j$ follows a separable autoregressive process of order one to model local spatial trend as \cite{Smith2001}, however, this is not possible for ESYWT with missing plot coordinates and we found that $\bs{R}_j = \sigma^2_j \bs{I}_{n_j}$ for all seven trials in CAIGE is sufficient.

\subsection{Outlier detection}\label{sec:stat}
Consider a $n \times 1$ vector of response $\bs{y}$ modelled by a general linear mixed model
\begin{equation}
\bs{y} = \bs{X}\bs{\tau} + \bs{Z}\bs{u} + \bs{e} \label{eq:baseline}
\end{equation} 
where $\bs{X}$ is the $n\times p_0$ design matrix of rank $p \leq p_0$ with associated fixed effects $\bs{\tau}$; $\bs{Z}$ is the $n\times q$ design matrix with associated random effects $\bs{u}$ and $\bs{e}$ is the $n\times 1$ vector of random errors. We assume that 
$$
\begin{bmatrix}
\bs{u} \\
\bs{e} 
\end{bmatrix} \sim N\left(\begin{bmatrix}
\bs{0} \\
\bs{0} 
\end{bmatrix}, 
\begin{bmatrix}
\bs{G}(\bs{\kappa}_G) & \bs{0} \\
\bs{0} & \bs{R}(\bs{\kappa}_R)
\end{bmatrix}
\right)
$$
where $\bs{G}$ and $\bs{R}$ are some positive definite matrices that are functions of variance parameters $\bs{\kappa}_G$ and $\bs{\kappa}_R$, respectively.
We refer model \eqref{eq:baseline} as the \emph{baseline model}. For simplicity of later results, we let $\bs{V} = \bs{Z}\bs{G}\bs{Z}^\tp + \bs{R}$; $\bs{P}_X = \bs{V}^{-1} - \bs{V}^{-1} \bs{X}(\bs{X}^\tp\bs{V}^{-1}\bs{X})^-\bs{X}^\tp\bs{V}^{-1}$; $\bs{\kappa} = (\bs{\kappa_G}^\tp, \bs{\kappa_R}^\tp)^\tp$ denote a vector of variance parameters; ${\hat{\bs{\kappa}} }$ is the residual maximum likelihood estimate of $\bs{\kappa}$ from the fit of baseline model \eqref{eq:baseline} and $\hat{\bs{P}}_X = \bs{P}_X(\hat{\bs{\kappa}})$.

The outlier detection methods are described next with the list presented in Table~\ref{tab:method}. We note that the baseline model could be based on either a MET analysis where $\bs{G}_e$ is unstructued or a single trial analysis.

\begin{center}
\begin{table}[ht]
	\centering
	\begin{tabular}{cccc}
		\toprule
		Statistic & Baseline Model & Variance Estimation & Computationally efficient \\ 
		\midrule
		\multirow{2}{*}{$t_i^2$} & Single Trial Analysis & \multirow{2}{*}{Baseline REML} & \multirow{2}{*}{Yes} \\
		& MET Analysis &  &  \\
		\multirow{2}{*}{$s_i^2$} & Single Trial Analysis & \multirow{2}{*}{MSOM REML} & \multirow{2}{*}{No}  \\
		& MET Analysis &   &  \\
		\multirow{2}{*}{$a\text{LRT}_i$} & Single Trial Analysis & \multirow{2}{*}{Baseline REML} &  \multirow{2}{*}{Yes} \\		
		& MET Analysis & &  \\				
		\multirow{2}{*}{$\text{LRT}_i$} & Single Trial Analysis & \multirow{2}{*}{VSOM REML}  &  \multirow{2}{*}{No} \\		
		& MET Analysis &   &  \\				
		\multirow{2}{*}{$\text{GCD}_i$} & Single Trial Analysis & \multirow{2}{*}{one-step updated} &  \multirow{2}{*}{Intermediate} \\		
		& MET Analysis &  &  \\		
		\bottomrule 
	\end{tabular}
	\caption{The list of statistic for outlier detection presented in this article. Here $t_i$ is the studentised conditional residual (equivalent to mean shift effect of MSOM with variance fixed from the baseline model); $s_i^2$ is the Wald test statistic of mean shift effect from MSOM; $a\text{LRT}_i$ is the analytically derived log-likelihood ratio statistic of comparing VSOM to baseline (assuming that the variance are fixed from baseline model); $\text{LRT}_i$ is the true log-likelihood ratio test statistic; and $\text{GCD}_i$ is the generalised Cook's distance. Each statistic can be computed based on a single trial or MET model as a baseline model. Clearly where the variance estimation was fixed from baseline model, the statistic is more easily computed.}\label{tab:method}	
\end{table}
\end{center}

\subsubsection{Generalised Cook's Distance (GCD)}\label{sec:cook}
\cite{Cook1977} proposed one of the most popular measures of influence for linear fixed models with independent and identically distributed (i.i.d.) errors. The i.i.d. assumption of the so-called Cook's distance can be relaxed with a generalised Cook's distance for the fixed effects given as used by \cite{Christensen1992}:
\begin{equation}
GCD_i = (\hat{\bs{\theta}} - \hat{\bs{\theta}}_{[i]})^\top \hat{\text{cov}}(\hat{\bs{\theta}})^{-1}  (\hat{\bs{\theta}} - \hat{\bs{\theta}}_{[i]}) \label{eq:gcd}
\end{equation}
where $\hat{\bs{\theta}}$ is the estimated parameter vector of interest in a model and $\hat{\bs{\theta}}_{[i]}$ is estimated parameter vector with the $i$-th response $y_i$ deleted from the full observation $\bs{y}$.

In this article, we use $\bs{\theta} = \bs{\kappa}$ and $\hat{\text{cov}}(\hat{\bs{\kappa}})^{-1}$ is estimated from the average information matrix $\mathcal{I}_A$ \citep{Johnson1995} of the fit from the baseline model \eqref{eq:baseline} and $\hat{\bs{\kappa}}_{[i]}$ is the estimated variance parameters with the $y_i$ deleted from the full observation $\bs{y}$ with one-step update using the fit of the baseline model as initial values. That is, $\hat{\bs{\kappa}}_{[i]} = \hat{\bs{\kappa}} +  \left[\mathcal{I}_A\right]^{-1}\bs{v}(\hat{\bs{\kappa}})$ where $\bs{v}=\partial \ell_R / \partial \bs{\kappa}$ is the score function and $\ell_R$ is the log of the residual likelihood. Note if an element of $\bs{\kappa}$ was close to the boundary, then it was omitted from $\bs{\kappa}$. 

\subsubsection{Mean Shift Outlier Model (MSOM)}\label{sec:msom}

Suppose we consider one ($i$-th) observation at a time, the mean shift outlier model (MSOM) is given by adding an extra fixed effect to the baseline model \eqref{eq:baseline} as below.
\begin{equation}
\bs{y} = \bs{X}\bs{\tau} + \bs{Z}\bs{u}+ \bs{\delta}_i \phi_i + \bs{\epsilon} \label{eq:msom}
\end{equation}
where $\bs{\delta}_i$ is a $n\times 1$ binary vector where the $i$-th position is $1$ and $0$ elsewhere and $\phi_i$ is the fixed mean shift effect for the $i$-th observation. We can estimate $\phi_i$ by its E-BLUE, however, this requires fitting $n$ corresponding models and this is clearly computationally expensive.  However if the variance parameters are known, we can see from Proposition~\ref{prop:mme} in Appendix that $\hat{\phi}_i = r_i$ where  $r_i = (\bs{\delta}_i^\tp\bs{P}_X\bs{\delta}_i)^{-1}\bs{\delta}_i^\tp\bs{P}_X\bs{y}$ is the conditional residual for the $i$-th observation. Furthermore, we have $\text{var}(r_i) = (\bs{\delta}_i^\tp\bs{P}_X\bs{\delta}_i)^{-1}$ and as such a studentised conditional residual may be obtained $t_i = r_i  / \sqrt{\text{var}(r_i) }$ as a test-statistic for $H_0: \phi_i = 0$. The studentised conditional residuals are estimated by fit of the baseline model \eqref{eq:baseline} alone and so is computationally efficient. 

\subsubsection{Variance Shift Outlier Model (VSOM)}
The variance shift outlier model (VSOM) is similar to MSOM except that the extra effect is a random effect, i.e.,  
\begin{equation}
\bs{y} = \bs{X}\bs{\tau} + \bs{Z}\bs{u}+ {\bs{\delta}_i o_i} + \bs{\epsilon} \label{eq:vsom}
\end{equation}
where $\bs{\delta}_i$ is a $n\times 1$ binary vector where the $i$-th position is $1$ and $0$ elsewhere and  $o_i \sim N(0, \omega_i \bs{\delta}_i^\tp\bs{R}\bs{\delta}_i )$ with $o_i$ is independent of $\bs{u}$ and $\bs{\epsilon}$. It can be easily seen that $\text{var}(y_i)$ of VSOM has a shifted variance of $\omega_i \bs{\delta}_i^\tp\bs{R}\bs{\delta}_i$ compared to the $i$-th observation in the baseline model \eqref{eq:baseline}. We note that, like the MSOM, fitting $n$ mixed models demands higher computing time. \cite{Bernal-Vasquez2016} also noted this and opted VSOM out of consideration, however, \cite{Gogel1997} and \cite{Gumedze2010} consider fixing the variance components from the baseline model and derived an analytical form of the log-likelihood ratio statistic for $H_0: \omega_i=0$ vs. $H_1: \omega_i > 0$. This analytic form is given as:
$${a\text{LRT}_i} = (n - p  - 1)\log\left(\frac{n -p - 1}{n - p - t_i^2}\right) - \log t_i^2$$
for fixing $\bs{\kappa} = \hat{\bs{\kappa}}$ and $t_i^2 > 1$. We emphasise that these statistics are derived from fit of one model, thus, is computationally efficient. Note that this log-likelihood ratio statistic is based on  residual likelihood:
$${\text{LRT}_i} = 2\left(\ell_R(\text{VSOM}) - \ell_R(\text{Baseline})\right) $$
where $\ell_R(\text{VSOM})$ and $\ell_R(\text{Baseline})$ are residual likelihood of the VSOM and Baseline model respectively.

\subsubsection{Calibration and adjustment for multiple testing of mean shift effect}\label{sec:qval}
We calculate the $p$-values of $H_0: \phi_i = 0$ vs. $H_1: \phi_i \neq 0$ by using Wald tests using:
\begin{enumerate}
	\item $P(\chi^2_1>t_i^2)$ if we estimated $\phi_i$ by $r_i$; or 
	\item $P(\chi^2_1 > s_i^2)$ if E-BLUE was used to estimate $\phi_i$. 
\end{enumerate}
We note that both $p$-value is using a misspecified distribution as the variance parameters are estimated, however, in the absence of computationally efficient method we forgo this - a similar feat was conducted by \cite{Bernal-Vasquez2016}. These $p$-values are then adjusted for multiple testing by using \cite{Holm1979} implemented in statistical software R \citep{R2008} as function \texttt{p.adjust} with \texttt{method=``holm''}. Note that if the baseline model was based on a single trial then the adjustment is made on a per trial basis. 

\subsection{Robust prediction}\label{sec:robust}
\subsubsection{Mean shift as substitute for deletion}
Proposition~\ref{thm:4} shows that the $\hat{\bs{\tau}}$ in MSOM is the same as deleting the $i$-th observation and fitting the baseline model if the variance components are known. {We have a similar result for the random effects since  $\tilde{\bs{u}} = \bs{G}\bs{Z}^\tp\bs{V}^{-1}(\bs{y} - \bs{X}\hat{\bs{\tau}} - \bs{\delta}_i\hat\phi_i)$. 
	
	As the $i$-th observation is completely indexed by $\bs{\delta}_i$, naturally the corresponding marginal residual $\hat{\epsilon}_i = y_i - \bs{x}_i^\tp\hat{\bs{\tau}} - \hat\phi_i$ where $\bs{x}_i$ is the $i$-th row of $\bs{X}$ would be 0. If a particular random effect, $u_k$, is associated only with the $i$-th observation then in this case $\tilde{u}_k=0$. This will be often the case for trials with $p$-rep designs where the G$\times$E effect may completely regress to the mean.  }

\subsubsection{Variance shift for down-weighting}
For VSOM, the $i$-th observation has a larger (shifted) variance of $\omega_i \bs{\delta}_i^\tp\bs{R}\bs{\delta}_i \geq 0$ compared to the baseline model.  For a known $\bs{V}$, the BLUE of $\bs{\tau}$ is equivalent to the solution of a weighted least squares $\hat{\bs{\tau}} = (\bs{X}^\tp\bs{W}\bs{X})^-\bs{X}^\tp\bs{W}\bs{y}$ where $\bs{W} = \bs{V}^{-1}$. It is easy to see that in light of a larger variance, the weight of the $i$-th observation is smaller and thus down-weighted for the estimation of $\bs{\tau}$. {The down-weighting of $i$-th observation can also be seen occur for random effects as $\tilde{\bs{u}} = \bs{G}\bs{Z}^\tp\bs{V}^{-1}(\bs{y} - \bs{X}\hat{\bs{\tau}})$ and so with a large $i$-th diagonal element of $\bs{V}$, the corresponding marginal residual will contribute less to the prediction of $\bs{u}$.}

\subsubsection{Simple robust prediction}
The two aforementioned models introduce an easily applicable robust modelling where we fit a model where each observations that are identified as outliers are fitted with separate mean or variance shift effects. In this article, MSOM and VSOM conducted for robust prediction are conducted using a baseline MET model with each observations, that is flagged as an outlier, fitted as a separate mean/variance shifted effect. More explicitly, if a set of observations $\mathcal{O}$ are identified as outliers then we fit the model
$$\bs{y} = \bs{X}\bs{\tau} + \bs{Z}\bs{u} + \sum_{i \in  \mathcal{O}} \bs{\delta}_i\phi_i + \bs{\epsilon}$$
where $\phi_i$s are fixed effects for MSOM and random effects for VSOM. 

In this article, an observation is flagged as an outlier based on the adjusted $p$-value of $t_i$ from the baseline MET model using a threshold of 0.05. 

\section{Simulation}\label{sec:sim}

We construct a total of 3000 simulated data based on three different settings (1000 simulations for each setting). Setting 1 is based on an alpha-lattice design that comprises 27 outliers out of 700 observations while setting 2 and 3 are based on a $p$-rep design that comprises 27 and 174 outliers out of 2131 observations, respectively. More specifically, in setting 1, we simulate the data from a parametric bootstrap from the fitted MET model for ESWYT (Table~\ref{tab:var} and \ref{tab:vargxe}). We perturb the simulated data to introduce outliers in the simulated data  as follows. We randomly select three trials, $j_1$, $j_2$ and $j_3$. In the first trial we randomly select $3$ plots and contaminate one-third of these plots by adding noise sampled from $N(4\sigma_{j_1}, \sigma_{j_1}^2)$, $N(7\sigma_{j_1}, \sigma_{j_1}^2)$ and $N(10\sigma_{j_1}, \sigma_{j_1}^2)$ where $\sigma_{j_1}^2$ correspond to the error variance for the $j_1$-th trial. The same is repeated for the second and third trials except we randomly select $9$ and $15$ plots with noise variance replaced with $\sigma_{j_2}$ and $\sigma_{j_3}$, respectively. In setting 2, we repeat the same as before except using the fitted MET model for CAIGE. In setting 3, we repeat the same as the setting 2 except we perturb a higher number of plots by randomly selecting 9, 45, and 120 plots instead of 3, 9, and 15 plots.  

We fit the data generated model to the simulated data and for each observation we estimate the statistics outlined in Table~\ref{tab:method}. Subsequently, observations that have an adjusted $p$-value $< 0.05$ based on $t_i$ using MET analysis as baseline model (see Section~\ref{sec:qval}) are flagged as outliers then we fit the four corresponding models:
\begin{enumerate}[A)]
	\item Fit the non-contaminated data-generated model  (Baseline Model). 
	\item Delete observations flagged as outliers and fit the data-generated model (Deletion Model)
	\item Fit those identified as outliers with a separate mean shifted effect (MSOM).
	\item Fit those identified as outliers with a separate variance shifted effect (VSOM).
\end{enumerate}		
Clearly in the above models A) is not robust. For each simulation, we evaluate its effectiveness based on the simulation-based accuracy 
\begin{equation}
\text{cor}(\tilde{\bs{u}}_{ge}, \bs{u}_{ge}) \label{eq:acc}
\end{equation}
where $\tilde{\bs{u}}_{ge}$ is the E-BLUP under Baseline Model, Deletion Model, MSOM or VSOM.

\section{Results}

\subsection{Outlier discrimination}\label{sec:disc}
A problem that is coupled with choosing a statistic for outlier detection is the issue of choosing the threshold for outlier classification. To circumvent this issue temporary, we assess the performance of the statistic for outlier detection by looking at how well it can discriminate between observations labelled as outliers (observations with added noise) and non-outliers (observations with no noise added). Large values of all the statistics presented in Table~\ref{tab:method} are suggestive of an outlying observation. An ideal statistic will have larger values for outlying observations compared to non-outlying observations. 

To assess the performance of outlier discrimination, we can use the Wilcoxon rank-sum test statistic \citep{Mann1947} or equivalently (and possibly more familiarly) to the area under the receiver operating characteristic (ROC) curve \citep[][]{Hanley1982}. An ideal classifier will give an area under the ROC (aROC) of 1 while a random classifier will give an aROC of 0.5. 

The results of the aROC (Table~\ref{tab:auc}) indicate that MSOM perform best using MET analysis as the baseline model for all three simulation settings. There is little difference between using the computational efficient $t_i^2$ and $s_i^2$. 
\begin{center}
\begin{table}[h!]
	\centering
\begin{adjustwidth}{-1.5in}{0in} 
	\begin{tabular}{cccc}
		\toprule
		Simulation Setting & Statistic & Single Trial Analysis & MET Analysis \\ \midrule
		CAIGE & $t_i^2$ & (0.601, 0.663, 0.682, 0.714, 0.772) & (0.625, 0.677, 0.700, 0.733, 0.785) \\ 
		(higher outlier numbers) & $s_i^2$ & (0.603, 0.663, 0.682, 0.713, 0.771) & (0.628, 0.678, 0.700, 0.732, 0.783) \\ 
		& $a\text{LRT}_i$ & (0.458, 0.518, 0.535, 0.550, 0.605) & (0.460, 0.531, 0.545, 0.560, 0.602) \\ 
		& $\text{LRT}_i$ & (0.571, 0.655, 0.677, 0.701, 0.781) & (0.466, 0.626, 0.663, 0.696, 0.780) \\ 
		& $\text{GCD}_i$ & (0.598, 0.736, 0.774, 0.811, 0.905) & (0.461, 0.581, 0.666, 0.702, 0.760) \\ \midrule
		CAIGE & $t_i^2$ & (0.757, 0.880, 0.906, 0.929, 0.986) & (0.797, 0.905, 0.929, 0.950, 0.992) \\ 
		(lower outlier numbers) & $s_i^2$ & (0.758, 0.880, 0.906, 0.928, 0.985) & (0.795, 0.905, 0.929, 0.950, 0.991) \\ 
		& $a\text{LRT}_i$ & (0.590, 0.742, 0.776, 0.811, 0.901) & (0.647, 0.781, 0.813, 0.845, 0.931) \\ 
		& $\text{LRT}_i$ & (0.700, 0.858, 0.889, 0.918, 0.986) & (0.743, 0.879, 0.912, 0.943, 0.992) \\ 
		& $\text{GCD}_i$ & (0.722, 0.845, 0.877, 0.906, 0.987) & (0.593, 0.893, 0.918, 0.940, 0.985) \\  \midrule
		ESWYT & $t_i^2$ & (0.707, 0.853, 0.875, 0.895, 0.961) & (0.767, 0.860, 0.881, 0.901, 0.960) \\ 
		& $s_i^2$ & (0.709, 0.853, 0.874, 0.893, 0.957) & (0.764, 0.860, 0.882, 0.900, 0.957) \\ 
		& $a\text{LRT}_i$ & (0.395, 0.658, 0.683, 0.712, 0.818) & (0.413, 0.662, 0.687, 0.713, 0.812) \\ 
		& $\text{LRT}_i$ & (0.705, 0.821, 0.848, 0.875, 0.955) & (0.718, 0.838, 0.862, 0.885, 0.953) \\ 
		& $\text{GCD}_i$ & (0.693, 0.821, 0.860, 0.894, 0.970) & (0.420, 0.713, 0.784, 0.850, 0.955) \\ 
		\bottomrule
	\end{tabular}
	\caption{The five number of summary of the 1000 aROCs of different statistics using a single trial or MET analysis by the three simulating setting (see Section~\ref{sec:sim} for more information). The list of statistic is shown in Table~\ref{tab:method}. }
	\label{tab:auc}
\end{adjustwidth}
\end{table}
\end{center}

\subsection{Computational efficiency vs. accuracy}\label{sec:prec}
We would expect that fixing the variance parameters at the baseline loses in the discrimination power compared to refitting the model -- this is indeed true for VSOM with a significant gain in aROC however the difference is minimal for MSOM (Table~\ref{tab:auc}). An attractive feature of fixing variance parameters from the baseline model is the computational efficiency which is important from a practical aspect. As MSOM appears to work well in computational aspect and offer the best discrimination, we explore further by examining it's outlier classification ability based on the adjusted $p$-value (see Section~\ref{sec:qval}). We use a adjusted $p$-value threshold of 0.05 to classify as an outlier.  To assess the performance of outlier classification, we use precision (the fraction of true positives over all positives), recall (the fraction of true positive over true positive and false negatives, also known as sensitivity) and the F1 score (the harmonic average of precision and recall). Ideally you will have a classify with precision and recall of 1, however, often classifiers that outperform in precision, do poorer in recall and vice versa. F1 score  is a combination of the precision and recall however depending on the objective, a score that weighs more on say, recall may be desirable. For example, in the context of outlier detection for the purpose of re-examining the observations, it may be desirable to have less false positive (higher precision) if the cost of re-examination is expensive. For our simple robust genomic prediction (Section~\ref{sec:robust}), it may less critical to have false positives and higher recall may be favoured. We see in general that we lose precision but gain in recall by using $s_i^2$ over $t_i^2$ (Table~\ref{tab:recall}). Furthermore, F1 scores are favourable for $s_i^2$ over $t_i^2$ for all three simulation settings. 

\begin{table}[ht]
\begin{adjustwidth}{-1in}{0in} 

	\centering
	\begin{tabular}{cccccc}
		\toprule
		Simulation Setting & Adjusted $p$-value & Baseline model & Precision & Recall & F1 \\ 
		\midrule
		CAIGE & \multirow{2}{*}{$t_i^2$} & Single Site Analysis & 0.878 & 0.018 & 0.035 \\ 
		(higher outlier numbers) &  & MET Analysis & 0.982 & 0.015 & 0.031 \\ 
		& \multirow{2}{*}{$s_i^2$} & Single Site Analysis & 0.767 & 0.022 & 0.042 \\ 
		&  & MET Analysis & 0.911 & 0.019 & 0.037 \\ \midrule
		CAIGE & \multirow{2}{*}{$t_i^2$} & Single Site Analysis & 0.869 & 0.235 & 0.365 \\ 
		(lower outlier numbers) &  & MET Analysis & 0.981 & 0.224 & 0.359 \\ 
		& \multirow{2}{*}{$s_i^2$} & Single Site Analysis & 0.763 & 0.276 & 0.401 \\ 
		&  & MET Analysis & 0.911 & 0.274 & 0.416 \\ \midrule
		ESWYT & \multirow{2}{*}{$t_i^2$} & Single Site Analysis & 0.955 & 0.133 & 0.230 \\ 
		&  & MET Analysis & 0.995 & 0.076 & 0.142 \\ 
		& \multirow{2}{*}{$s_i^2$} & Single Site Analysis & 0.801 & 0.185 & 0.298 \\ 
		&  & MET Analysis & 0.893 & 0.131 & 0.227 \\ 
		\bottomrule
	\end{tabular}
	\caption{The above table shows the average precision, recall, and F1 score across the 1000 simulations for classification of outliers using adjusted $p$-value for $t_i^2$ and $s_i^2$ (Section~\ref{sec:qval}).} \label{tab:recall}
\end{adjustwidth}
\end{table}

\subsection{Outlier detection via single trial or MET analysis?}\label{sec:cases}
Another pending question is whether we should conduct outlier detection based on per trial analysis or a combined MET analysis. We see in Table~\ref{tab:recall} that precision increases however recall decreases if we use a baseline model based on MET analysis over single trial analysis. F1 scores are generally higher for the single site analysis, however, as discussed in Section~\ref{sec:prec}, depending on the objective it may be desirable to use a score with different weights for precision and recall. We illustrate this with example application to the real data CAIGE next. 

The observed yield distribution for ESWYT and CAIGE is shown in Figure~\ref{fig:boxplot} marked with the identified outliers according to adjusted $p$-value of $t_i^2$ based on either single site analysis or MET analysis and $s_i^2$ based MET analysis. Single trial analysis and MET analysis differ in that the MET analysis borrows the strength across trials. 

Case 1 in Figure~\ref{fig:cases} show that the adjusted $p$-value based on MET analysis identified the yield of genotype G35 at Horsham as an outlier where as the single site analysis did not.  Yield of genotype G35 is one of the lowest observed in Horsham however examination of the yield of genotype G35 in other trials indicate a medium to above average performance. 

A similar observation as Case 1 is seen in Case 2 except the outlier is identified only by $s_i^2$ of the MET analysis, perhaps attesting to the higher precision of $s_i^2$ observed in the simulations (Table~\ref{tab:recall}). 

In Case 3, we see that the $t_i$ of single trial analysis identifies the genotype G6922234 at Toodyay as an outlier while MET analysis based mean shift did not. We can see that the flagged outlier in Toodyay is the smallest yield observed however the performance of the same genotype across trials indicate that this genotype is consistently low performing and perhaps not particularly unusual. 

Finally for Case 4, we observe a cautionary tale for outlier detection. Two observations in Balaklava that are flagged as outliers by all three adjusted $p$-values belong to Genotype 26. The reason for this can easily be seen from large variance between the two observations. Naturally the prediction of genotype G26 at Balaklava resides in between these two observations with these two observations result with large conditional residuals. In this case both observations are flagged as outliers however removal of either one of the observation will likely result in the other not being flagged as an outlier. 

\begin{center}
\begin{figure}[ht!]
	\includegraphics[width=0.49\textwidth]{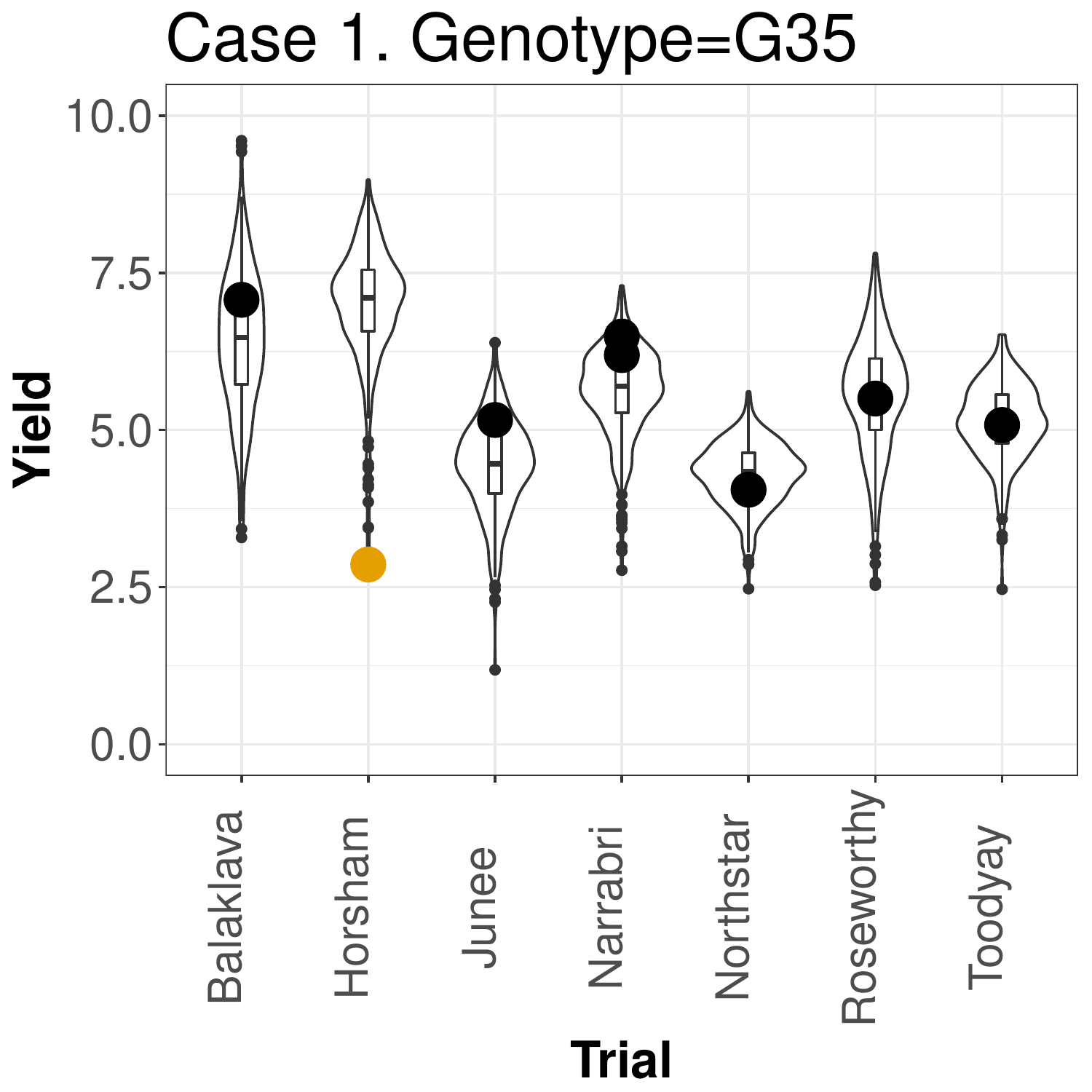}
	\includegraphics[width=0.49\textwidth]{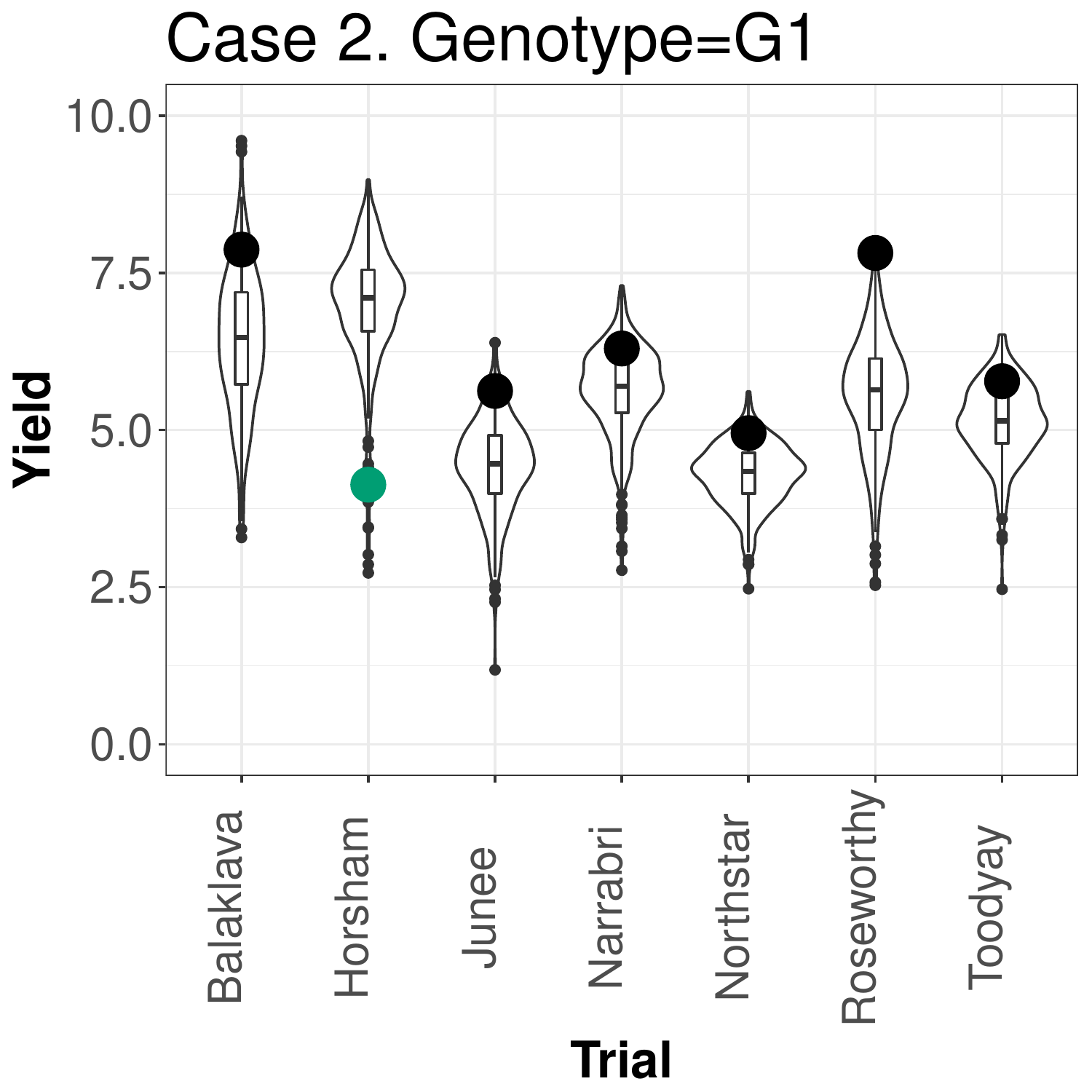}
	\includegraphics[width=0.49\textwidth]{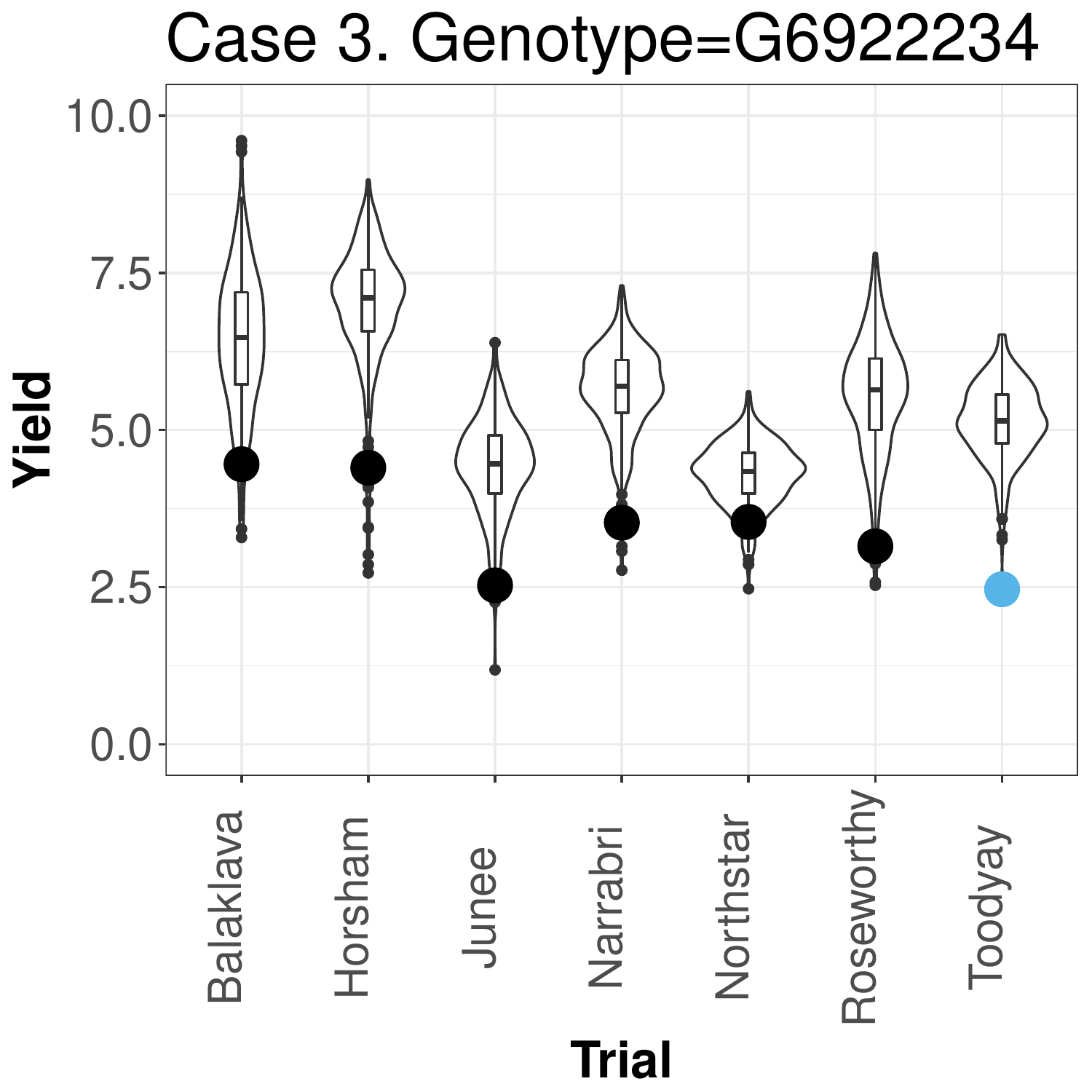}
	\includegraphics[width=0.49\textwidth]{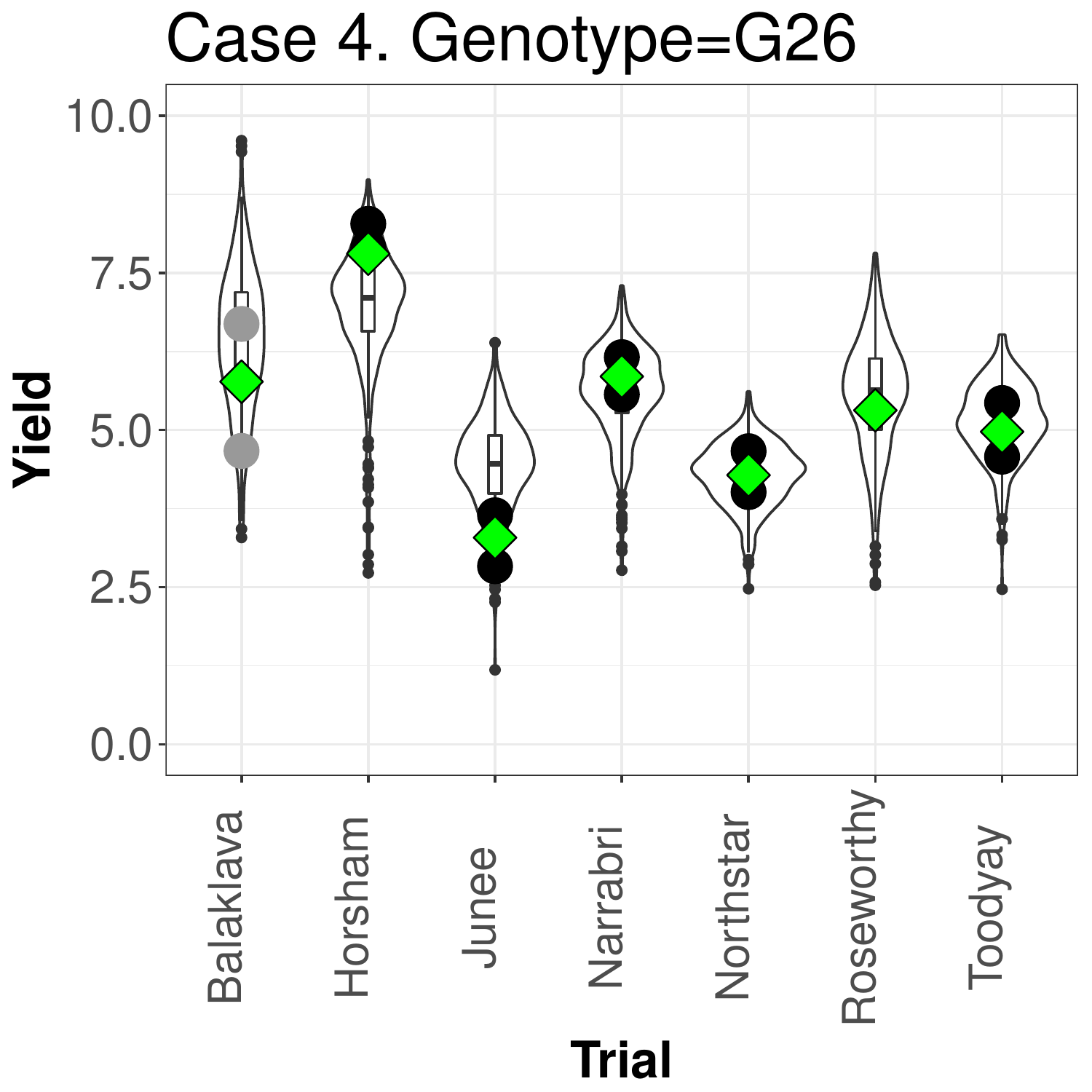}
	\caption{Above figures show the yield of four genotypes (G26, G35, G6922234, and G1) in CAIGE as enlarged circle points overlaid on the observed yield distribution from Figure~\ref{fig:boxplot}. The grey, yellow, blue and green circles indicate observations flagged as outliers per the adjusted $p$-value in Figure~\ref{fig:boxplot} and the green diamond in Case 1 indicate the predicted value of G26 at the corresponding trial under the baseline MET model. Each cases are discussed in-depth in Section~\ref{sec:cases}. }\label{fig:cases}
\end{figure}
\end{center}

\subsection{Robust genomic prediction}

Another use of MSOM and VSOM is that we can consider using it as a simple robust model (Section~\ref{sec:robust}). Table~\ref{tab:acc} shows that the relative gain of G$\times$E accuracy for using MSOM has a roughly similar median regardless of the number of identified outliers, however, we can see in Figure~\ref{fig:gxeaccuracy} B) that if all the correct outliers are identified and used for the MSOM then virtually all simulations would have had a higher accuracy. In practice, of course we are likely to misidentify or overlook real outliers and so the gain see in Figure~\ref{fig:gxeaccuracy} A) is more realistic. Figure~\ref{fig:gxeaccuracy} A) also shows that most simulated data gain in G$\times$E accuracy using a MET MSOM (78.6-91.2\%, see Table~\ref{tab:acc}), although the gain seems to reduce if there are higher number of outliers in the data owing likely to difficulty in identifying outliers due to swamping.

The outlying observations in our simulated data was generated from a shifted mean so we would expect under this situation that the MSOM will be a better fitting model. This indeed is reflected in the accuracy of the G$\times$E prediction of MSOM vs VSOM as seen in Figure~\ref{fig:gxeaccuracy} C). However, it should be noted that Figure~\ref{fig:gxeaccuracy} C) is using an ideal case with known true outliers and it appears that in a more realistic case where the outliers were identified by $t_i^2$ for a single trial analysis, VSOM has a similar gain in accuracy as MSOM. 

\begin{table}[ht]
	\centering
	\begin{tabular}{ccccc}
		\toprule
		Number of & \% relative accuracy gain $p_i$  & \multicolumn{2}{c}{\% $p_i \ge 0$} & 	Number of\\		 \cmidrule{3-4}
		identified outliers&  MSOM $-$ Baseline& MSOM & VSOM & simulations\\ 
		\midrule  \multicolumn{2}{l}{~~~~CAIGE (higher outlier numbers)} & & &\\ \midrule 
		0 & (0.0, 0.0, 0.0, 0.0, 0.0) & 100.0 & 100.0 &  69 \\ 
		1 & (-0.8,  0.0,  0.1,  0.2,  0.9) & 69.4 & 69.4 & 160 \\ 
		2 & (-0.7,  0.0,  0.2,  0.3,  2.1) & 77.4 & 77.9 & 226 \\ 
		3 & (-1.3,  0.1,  0.2,  0.4,  1.7) & 82.5 & 84.3 & 280 \\ 
		4 & (-1.3,  0.1,  0.3,  0.5,  3.1) & 82.0 & 81.0 & 189 \\ 
		5 & (-0.6,  0.0,  0.2,  0.5,  2.5) & 74.6 & 76.3 &  59 \\ 
		6 & (-0.3,  0.0,  0.3,  0.5,  1.3) & 80.0 & 80.0 &  15 \\ 
		7 & (0.2, 0.2, 0.4, 0.7, 0.7) & 100.0 & 100.0 &   2 \\
		All &(-1.3,  0.0,  0.2,  0.3,  3.1)&  79.9 & 80.4 & 1000 \\ 
		\midrule  \multicolumn{2}{l}{~~~~CAIGE (lower outlier numbers)} &  & & \\ \midrule  
		1 & (0.0, 0.1, 0.2, 0.3, 0.8) & 90.0 & 90.0 &  10 \\ 
		2 & (-0.5,  0.1,  0.2,  0.3,  0.6) & 87.0 & 87.0 &  23 \\ 
		3 & (-0.3,  0.1,  0.2,  0.4,  1.0) & 87.3 & 89.1 &  55 \\ 
		4 & (-0.3,  0.1,  0.3,  0.6,  3.6) & 88.3 & 89.3 & 103 \\ 
		5 & (-1.0,  0.2,  0.4,  0.6,  3.0) & 89.2 & 89.8 & 176 \\ 
		6 & (-2.5,  0.2,  0.4,  0.6,  2.4) & 90.7 & 91.2 & 205 \\ 
		7 & (-0.6,  0.3,  0.5,  0.8,  2.0) & 92.7 & 93.2 & 177 \\ 
		8 & (-0.5,  0.3,  0.5,  0.8,  3.0) & 94.1 & 94.1 & 136 \\ 
		9 & (-0.3,  0.3,  0.5,  0.8,  2.6) & 98.5 & 98.5 &  67 \\ 
		10 & (-1.4,  0.3,  0.7,  0.9,  1.9) & 88.6 & 88.6 &  35 \\ 
		11 & (-0.1,  0.3,  0.6,  0.8,  1.9) & 92.3 & 100.0 &  13 \\ 
		All & (-2.5,  0.2,  0.4,  0.7,  3.6) & 91.2 & 91.8 &  1000\\
		\midrule \multicolumn{2}{l}{~~~~ESWYT}  & & &  \\  \midrule
		0 & (0.0, 0.0, 0.0, 0.0, 0.0) & 100.0 & 100.0 &  18 \\ 
		1 & (-2.5,  0.0,  0.5,  1.1,  4.6) & 78.3 & 77.9 & 253 \\ 
		2 & (-2.2,  0.1,  0.6,  1.3,  6.4) & 77.7 & 78.7 & 431 \\ 
		3 & (-2.4,  0.1,  0.7,  1.8,  5.5) & 78.5 & 79.3 & 246 \\ 
		4 & (-2.5,  0.1,  0.8,  2.2,  5.8) & 80.0 & 80.0 &  50 \\ 
		5 & (0.1, 0.1, 1.5, 3.0, 3.0) & 100.0 & 100.0 &   2 \\
		All & (-2.5,  0.0,  0.6,  1.4,  6.4) & 78.6 & 79.1  & 1000\\
		\bottomrule
	\end{tabular}
	\caption{The first column shows the number of outliers identified using adjusted $p$-value of $t_i^2$ of the single trial analysis with the subsequent columns displaying the five number summary of the percentage relative gain in G$\times$E accuracy using MSOM MET model; the percentage of simulations where G$\times$E accuracy was greater or equal for VSOM and MSOM compared to baseline MET model; and the number of simulations associated with the number of identified outliers.}\label{tab:acc}
\end{table}

\begin{figure}[ht!]
	\centering\includegraphics[width=\textwidth]{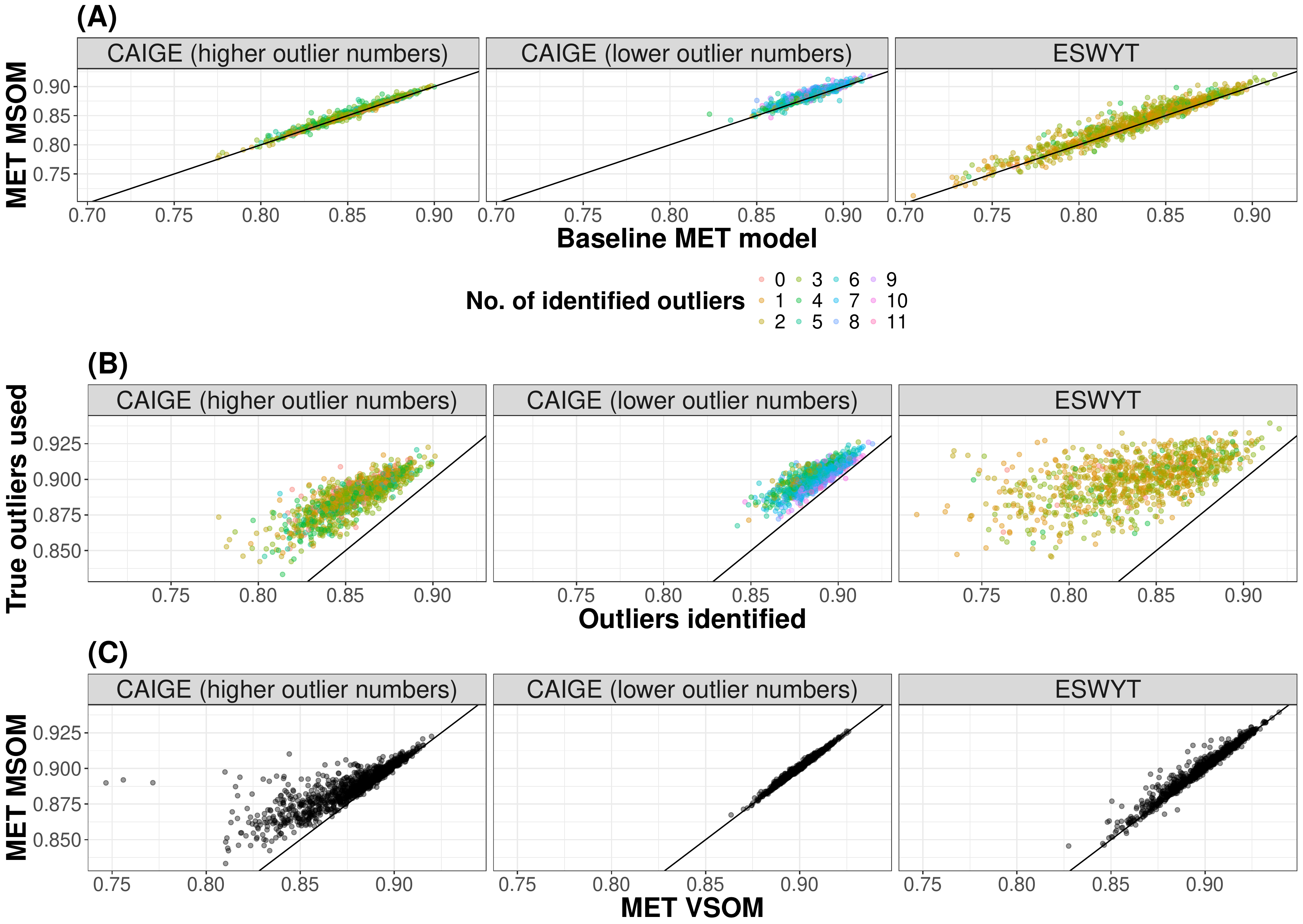}
	\caption{The figures above show the scatterplot of the accuracy of G$\times$E effects \eqref{eq:acc} by simulation setting for A) baseline MET model vs. MET MSOM where  the observations that were flagged as an outlier according to the adjusted $p$-value of $t_i^2$ of single trial analysis; B) MET MSOM of where outliers were identified by the adjusted $p$-value of $t_i^2$ of single trial analysis vs. MET MSOM where the true outlier labels were used; and C) MET VSOM vs. MET MSOM where both models used outliers from the true label. Each point correspond to the accuracy of G$\times$E for a particular simulated data and the black line corresponds to $y=x$. If the point is above the $y=x$ line, the model on the vertical axis is performing better for prediction of G$\times$E effects. }\label{fig:gxeaccuracy}
\end{figure}

\section{Discussion}

In this paper, we have shown that the studentised conditional residuals $t_i^2$ from a MET model offer a higher discrimination power than using a single trial analysis (Section~\ref{sec:disc}). We show this increases also in precision however with some decrease in recall (Table~\ref{tab:recall}). Depending on the objective, a higher precision may be more desirable and we illustrate particular cases on the real CAIGE data (Section~\ref{sec:cases}) of where such advantages may occur by using a MET model that borrows strength across trials. Borrowing strength may be desirable in particular for $p$-rep designs as there is only one plot of a genotype in a particular trial.  It should be noted that our MET data generating model comprised of positive genetic correlation between trials (Table~\ref{tab:vargxe}) and that borrowing strength may not necessary yield in better outlier discrimination performance.   

We further show that the mean shift effect is the conditional residual with variance fixed from the baseline model (Section~\ref{sec:msom}) and this follows that the calculation of $t_i$ is computational efficient requiring fit of the baseline model alone. A concern of course is that fixing the variance from baseline may compromise in the loss of power in outlier detection, however, our simulation results show that $t_i^2$ closely follows the discrimination power of $s_i^2$ (Table~\ref{tab:auc}) with gain in precision but loss in recall (Table~\ref{tab:auc}). As mention in Section~\ref{sec:prec}, it will depend on the objective whether a higher precision or a higher recall is favoured, however, for $n$ number of observations, we require fitting $n$ models for $s_i^2$ and thus from a practical aspect, $t_i^2$ would be preferred even with the loss of recall.

Additionally, we show that the MSOM and VSOM can serve as a simple robust model for genomic prediction (Section~\ref{sec:robust}). The benefit of this robust modelling is that MSOM and VSOM are simple addition to the baseline linear mixed model as such this poses little difficulty to apply in user preferred linear mixed model software. A more sophisticated fitting procedure of robust linear mixed models exist \citep{Koller2013, Lourenco2017} however these may yet lack features that the practitioner requires. 

Finally, MSOM and VSOM for robust prediction require a priori identification of outliers and a correct identification can reap more benefit in G$\times$E accuracy (Figure~\ref{fig:gxeaccuracy}). The identification of outliers is however a difficult task for a complex data such as this and the methods proposed (Table~\ref{tab:method}) consider only one observation at a time - a feat that results in difficultly to tackle swamping and masking. Future research will benefit with better outlier detection methods in linear mixed models that is practical for application and  user-friendly software development of more sophisticated robust linear mixed model fitting procedures.

\appendix

\section*{Appendix}

\begin{table}[ht]
	\centering
	\begin{tabular}{lrrrrr}
		\toprule
		ESWYT & & & & &\\ \toprule Trial & Trial Mean & Replicate & Sub-Block &  Residual & \\ \midrule
		L127 & 3.95 & 5.16E-08 & 3.11E-02 & 1.89E-01 &  \\ 
		L130 & 3.79 & 5.16E-08 & 1.24E-07 & 3.55E-01 &  \\ 
		L14 & 5.00 & 5.16E-08 & 1.31E-01 & 1.67E-01 &  \\ 
		L51 & 5.68 & 8.16E-07 & 1.85E-02 & 1.45E-01 &  \\ 
		L62 & 4.48 & 5.16E-08 & 5.37E-02 & 1.65E-01 &  \\ 
		L68 & 3.69 & 5.16E-08 & 2.35E-03 & 4.23E-02 &  \\ 
		L79 & 4.04 & 5.16E-08 & 9.31E-08 & 2.88E-02 &  \\ 
		\toprule CAIGE & & & & & \\ \toprule Trial & Trial Mean & Block & Row & Column & Residual \\ \midrule
		Balaklava & 6.38 & 2.26E-03 & 2.74E-02 & 5.59E-02 & 8.42E-02 \\ 
		Horsham & 6.94 & 7.30E-08 & 3.26E-03 & 7.57E-02 & 8.95E-02 \\ 
		Junee & 4.40 & 7.30E-08 & 1.51E-02 & 7.10E-02 & 7.74E-02 \\ 
		Narrabri & 5.59 & 2.75E-03 & 9.99E-03 & 8.25E-03 & 6.64E-02 \\ 
		Northstar & 4.29 & 3.76E-02 & 7.30E-08 & 1.42E-02 & 5.68E-02 \\ 
		Roseworthy & 5.51 & 1.14E-02 & 1.00E-02 & 2.32E-02 & 2.32E-01 \\ 
		Toodyay & 5.11 & 7.30E-08 & 3.70E-03 & 4.03E-02 & 7.00E-02 \\ 
		\bottomrule
	\end{tabular}
	\caption{The estimated variance components and the (fixed) trial mean of the MET model fitted to ESWYT and CAIGE.}\label{tab:var}
\end{table}

\begin{table}[ht]
	\centering
	\begin{tabular}{rrrrrrrr}
		\toprule
		ESWYT & & & & & & & \\
		\toprule
		& L127 & L130 & L14 & L51 & L62 & L68 & L79 \\ 
		\midrule
		L127 & 0.31 & 0.06 & 0.01 & 0.23 & 0.16 & 0.15 & 0.11 \\ 
		L130 & 0.06 & 0.74 & 0.41 & 0.16 & 0.16 & 0.12 & 0.06 \\ 
		L14 & 0.01 & 0.41 & 0.51 & 0.10 & 0.14 & 0.12 & 0.05 \\ 
		L51 & 0.23 & 0.16 & 0.10 & 0.39 & 0.13 & 0.12 & 0.08 \\ 
		L62 & 0.16 & 0.16 & 0.14 & 0.13 & 0.44 & 0.09 & 0.05 \\ 
		L68 & 0.15 & 0.12 & 0.12 & 0.12 & 0.09 & 0.14 & 0.09 \\ 
		L79 & 0.11 & 0.06 & 0.05 & 0.08 & 0.05 & 0.09 & 0.11 \\ 
		\bottomrule
		CAIGE & & & & & & & \\
		\toprule
		& Balaklava & Horsham & Junee & Narrabri & Northstar & Roseworthy & Toodyay \\ 
		\midrule
		Balaklava & 1.07 & 0.14 & 0.40 & 0.38 & 0.27 & 0.60 & 0.35 \\ 
		Horsham & 0.14 & 0.86 & 0.21 & 0.26 & 0.15 & 0.15 & 0.22 \\ 
		Junee & 0.40 & 0.21 & 0.47 & 0.28 & 0.20 & 0.30 & 0.28 \\ 
		Narrabri & 0.38 & 0.26 & 0.28 & 0.45 & 0.21 & 0.35 & 0.27 \\ 
		Northstar & 0.27 & 0.15 & 0.20 & 0.21 & 0.16 & 0.20 & 0.17 \\ 
		Roseworthy & 0.60 & 0.15 & 0.30 & 0.35 & 0.20 & 0.58 & 0.30 \\ 
		Toodyay & 0.35 & 0.22 & 0.28 & 0.27 & 0.17 & 0.30 & 0.28 \\ 
		\bottomrule
	\end{tabular}
	\caption{The estimated variance parameter of $\bs{G}_e$ of the MET model fitted to ESWYT and CAIGE.}\label{tab:vargxe}
\end{table}

\section{Software}\label{sec:software}
All models in this paper were fitted using the ASReml-R package \citep{Butler2009} within the R statistical environment \citep{R2008} which uses the average information algorithm \citep{Gilmour1995} for residual maximum likelihood (REML) estimation for variance parameters. Once the REML estimates of the variance parameters are obtained, a solution of the mixed model equations (MME) is used to provide the empirical best linear unbiased estimates (E-BLUEs) of the fixed effects and empirical best linear unbiased predictions (E-BLUPs) of the random effects \citep{Gilmour2004}.

\section{MSOM proofs}
We can rewrite the MSOM \eqref{eq:msom} as 
$$\bs{y} = \bs{X}_\delta\bs{\tau}_\delta + \bs{Z}\bs{u} + \bs{e} $$
where $\bs{X}_\delta\ = \begin{bmatrix}
\bs{X} & \bs{\delta}_i 
\end{bmatrix}$ and $\bs{\tau}_\delta = (\bs{\tau}^\tp, \phi_i)^\tp$. The mixed model equation of  \citep{Henderson1949} is given as 
\begin{equation}
\begin{bmatrix}
\bs{X}^\tp\bs{R}^{-1}\bs{X} &  \bs{X}^\tp\bs{R}^{-1}\bs{\delta}_i &  \bs{X}^\tp\bs{R}^{-1}\bs{Z}\\
\bs{\delta}_i^\tp\bs{R}^{-1}\bs{X} & \bs{\delta}_i^\tp\bs{R}^{-1}\bs{\delta}_i &   \bs{\delta}_i^\tp\bs{R}^{-1}\bs{Z}\\
\bs{Z}^\tp\bs{R}^{-1}\bs{X} & \bs{Z}^\tp\bs{R}^{-1}\bs{\delta}_i & \bs{Z}^\tp\bs{R}^{-1}\bs{Z} + \bs{G}^{-1}\\
\end{bmatrix}
\begin{bmatrix}
\hat{\bs{\tau}}\\
\hat{\phi}_i\\
\tilde{\bs{u}}\\
\end{bmatrix}
=
\begin{bmatrix}
\bs{X}^\tp\bs{R}^{-1}\bs{y}\\
\bs{\delta}_i^\tp\bs{R}^{-1}\bs{y}\\
\bs{Z}^\tp\bs{R}^{-1}\bs{y}
\end{bmatrix}
\label{eq:mme}
\end{equation}

In this section, we assume that the variance $\bs{G}$ and $\bs{R}$ are known.
\begin{prop}\label{prop:mme}
	\begin{eqnarray*}
		\hat{\bs{\tau}}  &=& 	(\bs{X}^\tp\bs{P}_{\delta_i}\bs{X})^{-}\bs{X}^\tp\bs{P}_{\delta_i}\bs{y}\\
		\hat{\phi}_i &=& 	(\bs{\delta}_i^\tp\bs{P}_X\bs{\delta}_i)^{-1}\bs{\delta}_i^\tp\bs{P}_X\bs{y}
	\end{eqnarray*}
	where $\bs{P}_X = \bs{V}^{-1} - \bs{V}^{-1}\bs{X}(\bs{X}^\tp\bs{V}^{-1}\bs{X})^-\bs{X}^\tp\bs{V}^{-1}$ and $\bs{P}_{\delta_i} = \bs{V}^{-1} - \bs{V}^{-1}\bs{\delta}_i(\bs{\delta}_i^\tp\bs{V}^{-1}\bs{\delta}_i)^{-1}\bs{\delta}_i^\tp\bs{V}^{-1}$.
\end{prop}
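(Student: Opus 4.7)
\textbf{Proof plan for Proposition~\ref{prop:mme}.} The plan is to reduce the MME \eqref{eq:mme} to a generalised least squares (GLS) system in $(\hat{\bs{\tau}},\hat{\phi}_i)$ alone, and then apply the usual Schur-complement/absorption argument twice to peel off either $\hat{\phi}_i$ or $\hat{\bs{\tau}}$ and recognise the remainder as the claimed projector $\bs{P}_X$ or $\bs{P}_{\delta_i}$.

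First I would absorb $\tilde{\bs{u}}$. Solving the third block row of \eqref{eq:mme} yields $\tilde{\bs{u}} = (\bs{Z}^\tp\bs{R}^{-1}\bs{Z}+\bs{G}^{-1})^{-1}\bs{Z}^\tp\bs{R}^{-1}(\bs{y}-\bs{X}\hat{\bs{\tau}}-\bs{\delta}_i\hat{\phi}_i)$. Substituting this back into the first two block rows, and using the standard Woodbury identity
\[
\bs{R}^{-1} - \bs{R}^{-1}\bs{Z}(\bs{Z}^\tp\bs{R}^{-1}\bs{Z}+\bs{G}^{-1})^{-1}\bs{Z}^\tp\bs{R}^{-1} = \bs{V}^{-1},
\]
collapses the system to the partitioned GLS normal equations
\[
\begin{bmatrix}
\bs{X}^\tp\bs{V}^{-1}\bs{X} & \bs{X}^\tp\bs{V}^{-1}\bs{\delta}_i\\
\bs{\delta}_i^\tp\bs{V}^{-1}\bs{X} & \bs{\delta}_i^\tp\bs{V}^{-1}\bs{\delta}_i
\end{bmatrix}
\begin{bmatrix}\hat{\bs{\tau}}\\ \hat{\phi}_i\end{bmatrix}
=
\begin{bmatrix}\bs{X}^\tp\bs{V}^{-1}\bs{y}\\ \bs{\delta}_i^\tp\bs{V}^{-1}\bs{y}\end{bmatrix}.
\]

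Second, I would solve this $2\times 2$ block system by the Schur complement. For $\hat{\phi}_i$: express $\hat{\bs{\tau}}$ from the first block row as $\hat{\bs{\tau}} = (\bs{X}^\tp\bs{V}^{-1}\bs{X})^{-}\bs{X}^\tp\bs{V}^{-1}(\bs{y}-\bs{\delta}_i\hat{\phi}_i)$, substitute into the second block row, and collect terms in $\hat{\phi}_i$. The coefficient matrix becomes exactly $\bs{\delta}_i^\tp\bs{P}_X\bs{\delta}_i$ and the right-hand side becomes $\bs{\delta}_i^\tp\bs{P}_X\bs{y}$, giving the second identity. For $\hat{\bs{\tau}}$, I would run the symmetric argument: solve the second block row for $\hat{\phi}_i = (\bs{\delta}_i^\tp\bs{V}^{-1}\bs{\delta}_i)^{-1}\bs{\delta}_i^\tp\bs{V}^{-1}(\bs{y}-\bs{X}\hat{\bs{\tau}})$ and substitute into the first, whereupon the coefficient and right-hand side regroup into $\bs{X}^\tp\bs{P}_{\delta_i}\bs{X}$ and $\bs{X}^\tp\bs{P}_{\delta_i}\bs{y}$ respectively.

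The only delicate point is that $\bs{X}^\tp\bs{V}^{-1}\bs{X}$ need not be of full rank (recall $\text{rank}(\bs{X})=p\le p_0$), so the inverse in the elimination step must be read as a generalised inverse. To make the substitution rigorous I would verify the standard consistency condition $\bs{X}^\tp\bs{V}^{-1}\bs{X}(\bs{X}^\tp\bs{V}^{-1}\bs{X})^{-}\bs{X}^\tp\bs{V}^{-1} = \bs{X}^\tp\bs{V}^{-1}$, which follows because $\bs{X}^\tp\bs{V}^{-1}\bs{X}$ and $\bs{X}^\tp\bs{V}^{-1}$ share a row space; this guarantees that the resulting $\hat{\phi}_i$ is invariant to the choice of generalised inverse and that the reduction is legitimate. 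By contrast the scalar $\bs{\delta}_i^\tp\bs{V}^{-1}\bs{\delta}_i$ is strictly positive, so its inverse causes no trouble in the symmetric argument. Handling this generalised-inverse bookkeeping is the main (modest) obstacle; once it is checked the two formulas drop out directly from the Schur-complement manipulation.
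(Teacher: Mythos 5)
Your proposal is correct and follows essentially the same route as the paper: the paper likewise writes the estimator as the GLS solution $(\bs{X}_\delta^\tp\bs{V}^{-1}\bs{X}_\delta)^-\bs{X}_\delta^\tp\bs{V}^{-1}\bs{y}$ for the augmented design $\bs{X}_\delta=[\bs{X}\;\;\bs{\delta}_i]$ and then invokes the standard (generalised) inverse of a partitioned matrix, which is exactly your Schur-complement elimination. Your additional steps --- deriving the partitioned normal equations from the MME via absorption and the Woodbury identity, and checking the generalised-inverse consistency condition --- merely fill in details the paper takes as known.
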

\begin{proof}
	\begin{eqnarray*}
		\begin{bmatrix}
			\hat{\bs{\tau}}\\
			\hat{\phi}_i 
		\end{bmatrix} &=& (\bs{X}_\delta \bs{V}^{-1}\bs{X}_\delta)^-\bs{X}_\delta^\tp\bs{V}^{-1}\bs{y} \\
		&=& \left(\begin{bmatrix}
			\bs{X}^\tp \\
			\bs{\delta}^\tp
		\end{bmatrix}\bs{V}^{-1}\begin{bmatrix}
			\bs{X} & \bs{\delta} 
		\end{bmatrix}
		\right)^- \begin{bmatrix}
			\bs{X}^\tp \\
			\bs{\delta}^\tp
		\end{bmatrix} \bs{V}^{-1}\bs{y} \\
		&=&
		\begin{bmatrix}
			\bs{X}^\tp\bs{V}^{-1}\bs{X} &\bs{X}^\tp\bs{V}^{-1}\bs{\delta}_i  \\
			\bs{\delta}_i^\tp\bs{V}^{-1}\bs{X} &\bs{\delta}_i^\tp\bs{V}^{-1}\bs{\delta}_i    \\
		\end{bmatrix}^- \begin{bmatrix}
			\bs{X}^\tp \\
			\bs{\delta}_i^\tp 
		\end{bmatrix} \bs{V}^{-1} \bs{y} \\
		&=&
		\begin{bmatrix}
			(\bs{X}^\tp\bs{P}_{\delta_i}\bs{X})^{-}\bs{X}^\tp\bs{P}_{\delta_i}\bs{y}\\
			(\bs{\delta}_i^\tp\bs{P}_X\bs{\delta}_i)^{-1}\bs{\delta}_i^\tp\bs{P}_X\bs{y}\\
		\end{bmatrix}
	\end{eqnarray*}
	where the last step follows from the standard (generalised) inverse of a partitioned matrix.
\end{proof}

Without loss of generality, assume henceforth that $i=1$ and we partition the corresponding vector and matrices as follows 
$$\bs{y} = \begin{bmatrix}
y_i \\
\bs{y}_{[i]}
\end{bmatrix}, \quad  \bs{X}_\delta = \begin{bmatrix}
\bs{x}_i^\tp & 1 \\
\bs{X}_{[i]} & \bs{0} \\
\end{bmatrix}, \quad \bs{V} = 
\begin{bmatrix}
v_{ii} & \bs{v}_i^\tp \\
\bs{v}_i & \bs{V}_{[i]}
\end{bmatrix}.
$$
where $\bs{y}_{[i]}$ and $\bs{X}_{[i]}$ are $\bs{y}$ and $\bs{X}$ with the $i$-th row removed, respectively; $\bs{x}_i^\tp$ is the $i$-th row of $\bs{X}$; $\bs{V}_{[i]}$ is the matrix $\bs{V}$ with the $i$-th row and column removed;  $v_{ii}$ is the $i$-th diagonal element of $\bs{V}$ and $\bs{v}_i$ is the $i$-th column of $\bs{V}$ with the $i$-th element removed. 

Suppose also 
$$\bs{V}^{-1} = 
\begin{bmatrix}
v^{ii} & \bs{\lambda}_i^\tp \\
\bs{\lambda}_i & \bs{\Lambda}_{[i]}
\end{bmatrix}.$$

\begin{prop}\label{prop:viinv}
	$	\bs{V}_{[i]}^{-1} = \bs{\Lambda}_{[i]}  -  \bs{\lambda}_i\bs{\lambda}_i^\tp/v^{ii}$
\end{prop}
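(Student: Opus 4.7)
The plan is to read off the claim directly from the block identity $\bs{V}\bs{V}^{-1} = \bs{I}$, using that $\bs{V}_{[i]}$ is invertible (it is a principal submatrix of a positive definite matrix) and that $v^{ii} > 0$ for the same reason.

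First I would partition the identity $\bs{V}\bs{V}^{-1} = \bs{I}$ into its four blocks. The bottom-left block gives
$$\bs{v}_i v^{ii} + \bs{V}_{[i]} \bs{\lambda}_i = \bs{0},$$
which upon left-multiplying by $\bs{V}_{[i]}^{-1}$ yields the key relation
$$\bs{V}_{[i]}^{-1}\bs{v}_i = -\bs{\lambda}_i/v^{ii}. \qquad (\ast)$$
The bottom-right block gives
$$\bs{v}_i \bs{\lambda}_i^\tp + \bs{V}_{[i]}\bs{\Lambda}_{[i]} = \bs{I},$$
so that $\bs{\Lambda}_{[i]} = \bs{V}_{[i]}^{-1} - \bs{V}_{[i]}^{-1}\bs{v}_i \bs{\lambda}_i^\tp$.

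Substituting $(\ast)$ into the last term gives $\bs{\Lambda}_{[i]} = \bs{V}_{[i]}^{-1} + \bs{\lambda}_i \bs{\lambda}_i^\tp/v^{ii}$, and rearranging delivers the claimed identity $\bs{V}_{[i]}^{-1} = \bs{\Lambda}_{[i]} - \bs{\lambda}_i\bs{\lambda}_i^\tp/v^{ii}$.

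There is essentially no obstacle here: the statement is a repackaging of the standard Schur-complement formula for a block inverse, and the only items to check are that $\bs{V}_{[i]}$ and the scalar $v^{ii}$ are nonzero so that both sides are well defined and the division in $(\ast)$ is legitimate. Both follow from the positive definiteness of $\bs{V}$, which is assumed throughout the baseline model \eqref{eq:baseline}. An equally short alternative would be to multiply the claimed right-hand side on the left by $\bs{V}_{[i]}$ and verify directly that the product is the identity, again using $(\ast)$; I would mention this as a cross-check but present the block-identity derivation above as the main argument since it avoids introducing the Schur complement $v_{ii} - \bs{v}_i^\tp\bs{V}_{[i]}^{-1}\bs{v}_i$ explicitly.
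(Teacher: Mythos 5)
Your proof is correct and follows the same route as the paper: both extract the bottom-left and bottom-right block equations from $\bs{V}\bs{V}^{-1} = \bs{I}_n$, namely $\bs{v}_i v^{ii} + \bs{V}_{[i]}\bs{\lambda}_i = \bs{0}$ and $\bs{v}_i \bs{\lambda}_i^\tp + \bs{V}_{[i]}\bs{\Lambda}_{[i]} = \bs{I}_{n-1}$, and combine them. You simply carry out explicitly the substitution that the paper leaves implicit, and you add the (correct) remark that positive definiteness of $\bs{V}$ guarantees $\bs{V}_{[i]}$ is invertible and $v^{ii} > 0$.
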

\begin{proof}
	We have $\bs{V}\bs{V}^{-1} = \bs{I}_n$ and so it follows that $\bs{v}_i\bs{\lambda}_i^\tp + \bs{V}_{[i]}\bs{\Lambda}_{[i]} = \bs{I}_{n-1}$ and $\bs{v}_i v^{ii} + \bs{V}_{[i]}\bs{\lambda}_i = \bs{0}$.
	See also \cite{Christensen1992}.
\end{proof}
\begin{prop}\label{prop:Pdelta}
	$	\bs{P}_{\delta_i} = \begin{bmatrix}
	0 & \bs{0}^\tp \\
	\bs{0} & \bs{V}^{-1}_{[i]}
	\end{bmatrix}$
\end{prop}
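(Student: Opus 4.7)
The plan is to take the assumed partition (with $i=1$ WLOG) and simply multiply out the rank-one update in the definition
$$\bs{P}_{\delta_i} = \bs{V}^{-1} - \bs{V}^{-1}\bs{\delta}_i(\bs{\delta}_i^\tp\bs{V}^{-1}\bs{\delta}_i)^{-1}\bs{\delta}_i^\tp\bs{V}^{-1},$$
block by block, and then appeal to Proposition~\ref{prop:viinv} to recognise the bottom-right block as $\bs{V}_{[i]}^{-1}$.

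The first step is to observe that because $\bs{\delta}_i = (1,\bs{0}^\tp)^\tp$, the vector $\bs{V}^{-1}\bs{\delta}_i$ is just the first column of $\bs{V}^{-1}$, namely $(v^{ii},\bs{\lambda}_i^\tp)^\tp$, and similarly $\bs{\delta}_i^\tp\bs{V}^{-1}$ is the first row $(v^{ii},\bs{\lambda}_i^\tp)$. The scalar $\bs{\delta}_i^\tp\bs{V}^{-1}\bs{\delta}_i$ is $v^{ii}$, so the correction term is the rank-one matrix
$$\frac{1}{v^{ii}}\begin{bmatrix} v^{ii} \\ \bs{\lambda}_i \end{bmatrix}\begin{bmatrix} v^{ii} & \bs{\lambda}_i^\tp \end{bmatrix} = \begin{bmatrix} v^{ii} & \bs{\lambda}_i^\tp \\ \bs{\lambda}_i & \bs{\lambda}_i\bs{\lambda}_i^\tp/v^{ii} \end{bmatrix}.$$

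Subtracting this from $\bs{V}^{-1}=\begin{bmatrix} v^{ii} & \bs{\lambda}_i^\tp \\ \bs{\lambda}_i & \bs{\Lambda}_{[i]} \end{bmatrix}$ kills the first row and column entirely and leaves $\bs{\Lambda}_{[i]} - \bs{\lambda}_i\bs{\lambda}_i^\tp/v^{ii}$ in the lower-right block. Invoking Proposition~\ref{prop:viinv} identifies this block as $\bs{V}_{[i]}^{-1}$, giving the claimed form.

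There is really no obstacle here — the argument is pure block-matrix bookkeeping. The only thing to be careful about is the WLOG reduction to $i=1$; for general $i$ one either relabels via a permutation matrix $\bs{\Pi}$ that moves row/column $i$ to the first position (noting that $\bs{\Pi}$ commutes with the construction of $\bs{P}_{\delta_i}$ up to conjugation), or simply repeats the same partitioned-matrix computation with the $i$-th row and column isolated. Either route is routine and gives the stated identity immediately.
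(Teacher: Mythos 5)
Your proposal is correct and follows essentially the same route as the paper's own proof: both compute the rank-one correction $\bs{V}^{-1}\bs{\delta}_i(\bs{\delta}_i^\tp\bs{V}^{-1}\bs{\delta}_i)^{-1}\bs{\delta}_i^\tp\bs{V}^{-1}$ in block form, observe that subtracting it annihilates the first row and column leaving $\bs{\Lambda}_{[i]} - \bs{\lambda}_i\bs{\lambda}_i^\tp/v^{ii}$ in the lower-right block, and then identify that block as $\bs{V}_{[i]}^{-1}$ via Proposition~\ref{prop:viinv}. No substantive difference.
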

\begin{proof}
	Clearly, $(\bs{\delta}^\tp\bs{V}^{-1}\bs{\delta})^{-1} = 1/v^{ii}$ and $\delta_i\delta_i^\tp = \begin{bmatrix}1 & \bs{0}_{1 \times (n-1)} \\
	\bs{0}_{(n-1)\times 1} & \bs{0}_{(n-1)\times (n-1)}
	\end{bmatrix}$. Thus 
	\begin{eqnarray*}
		\bs{P}_{\delta_i} &=& \bs{V}^{-1} - \bs{V}^{-1}\bs{\delta}_i(\bs{\delta}_i^\tp\bs{V}^{-1}\bs{\delta}_i)^{-1}\bs{\delta}_i^\tp\bs{V}^{-1}\\
		&=&  \bs{V}^{-1}  - \bs{V}^{-1}\bs{\delta}_i\bs{\delta}_i^\tp\bs{V}^{-1}/v^{ii} \\
		&=& \begin{bmatrix}
			v^{ii} & \bs{\lambda}_i^\tp \\
			\bs{\lambda}_i & \bs{\Lambda}_{[i]}
		\end{bmatrix} - \begin{bmatrix}
			v^{ii} & \bs{\lambda}_i^\tp \\
			\bs{\lambda}_i & \bs{\lambda}_i\bs{\lambda}_i^\tp/v^{ii} 
		\end{bmatrix}\\
		&=&  \begin{bmatrix}
			0& \bs{0} \\
			\bs{0} & \bs{\Lambda}_{[i]} - \bs{\lambda}_i\bs{\lambda}_i^\tp/v^{ii} 
		\end{bmatrix}.
	\end{eqnarray*}
	Proof follows from Proposition~\ref{prop:viinv}.
\end{proof}

Suppose that we model the response with the $i$-th observation deleted as
$$\bs{y}_{[i]} = \bs{X}_{[i]}\bs{\tau}_{[i]} + \bs{Z}_{[i]}\bs{u}_{[i]} + \bs{e}_{[i]}$$
where 
$$
\begin{bmatrix}
\bs{u}_{[i]} \\
\bs{e}_{[i]}
\end{bmatrix} \sim N\left(\begin{bmatrix}
\bs{0} \\
\bs{0}
\end{bmatrix}, 
\begin{bmatrix}
\bs{G} & \bs{0} \\
\bs{0} & \bs{R}_{[i]}
\end{bmatrix}
\right)
$$
where $\bs{R}_{[i]}$ is $\bs{R}$  with the $i$-th row and column removed.  Note that $\bs{V}_{[i]} = \bs{Z}_{[i]}\bs{G}\bs{Z}_{[i]}^\tp + \bs{R}_{[i]}$.
\begin{prop}
	$\hat{\bs{\tau}} = \hat{\bs{\tau}}_{[i]}$. \label{thm:4}
\end{prop}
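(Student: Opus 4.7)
The plan is to combine the two preceding propositions to reduce the MSOM estimator for $\bs{\tau}$ to the GLS estimator on the deleted data set. From Proposition~\ref{prop:mme}, with variance components known, we already have the closed form
\[
\hat{\bs{\tau}} = (\bs{X}^\tp \bs{P}_{\delta_i}\bs{X})^- \bs{X}^\tp \bs{P}_{\delta_i}\bs{y},
\]
and from Proposition~\ref{prop:Pdelta} the matrix $\bs{P}_{\delta_i}$ has the block form with a zero first row/column and $\bs{V}^{-1}_{[i]}$ in the bottom-right block. So the whole argument reduces to substituting this block form and collecting terms.

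First I would write $\bs{X}$ and $\bs{y}$ in the same partition used throughout Appendix B, namely $\bs{X} = \begin{bmatrix}\bs{x}_i^\tp \\ \bs{X}_{[i]}\end{bmatrix}$ and $\bs{y} = \begin{bmatrix} y_i \\ \bs{y}_{[i]}\end{bmatrix}$. Then direct block multiplication using Proposition~\ref{prop:Pdelta} gives
\[
\bs{X}^\tp \bs{P}_{\delta_i}\bs{X} = \bs{X}_{[i]}^\tp \bs{V}^{-1}_{[i]} \bs{X}_{[i]}, \qquad \bs{X}^\tp \bs{P}_{\delta_i}\bs{y} = \bs{X}_{[i]}^\tp \bs{V}^{-1}_{[i]} \bs{y}_{[i]},
\]
because the zero first row and column of $\bs{P}_{\delta_i}$ annihilate $\bs{x}_i^\tp$ and $y_i$. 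Substituting these into the expression above yields
\[
\hat{\bs{\tau}} = (\bs{X}_{[i]}^\tp \bs{V}^{-1}_{[i]} \bs{X}_{[i]})^- \bs{X}_{[i]}^\tp \bs{V}^{-1}_{[i]} \bs{y}_{[i]},
\]
which is precisely the generalised least squares estimator $\hat{\bs{\tau}}_{[i]}$ obtained by fitting the baseline model to $\bs{y}_{[i]}$ with covariance $\bs{V}_{[i]} = \bs{Z}_{[i]} \bs{G}\bs{Z}_{[i]}^\tp + \bs{R}_{[i]}$.

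The only real subtlety to flag is that the identification relies on Proposition~\ref{prop:viinv}, which is already embedded in Proposition~\ref{prop:Pdelta}: the bottom-right block of $\bs{P}_{\delta_i}$ equals $\bs{\Lambda}_{[i]} - \bs{\lambda}_i\bs{\lambda}_i^\tp / v^{ii}$, and this is precisely $\bs{V}_{[i]}^{-1}$ by Proposition~\ref{prop:viinv}. If one preferred a self-contained derivation, the key check is the Schur-complement identity $\bs{V}_{[i]}^{-1} = \bs{\Lambda}_{[i]} - \bs{\lambda}_i\bs{\lambda}_i^\tp/v^{ii}$; otherwise the calculation above is pure block multiplication. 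A minor pedantic point: since $\bs{X}$ need not have full column rank, the generalised inverse is not unique, but the argument shows the two normal-equation systems coincide, hence every solution on one side is a solution on the other, and in particular any estimable function $\bs{c}^\tp \bs{\tau}$ has the same estimate under both fits. I do not expect any serious obstacle here; the proof is essentially a two-line block computation once Propositions~\ref{prop:mme} and \ref{prop:Pdelta} are in hand.
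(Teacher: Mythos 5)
Your proposal is correct and follows essentially the same route as the paper: invoke Proposition~\ref{prop:mme} for the closed form of $\hat{\bs{\tau}}$, substitute the block form of $\bs{P}_{\delta_i}$ from Proposition~\ref{prop:Pdelta}, and observe that the zero first row and column reduce the expression to $(\bs{X}_{[i]}^\tp\bs{V}_{[i]}^{-1}\bs{X}_{[i]})^{-}\bs{X}_{[i]}^\tp\bs{V}_{[i]}^{-1}\bs{y}_{[i]}$. Your added remark on the non-uniqueness of the generalised inverse and estimable functions is a sensible clarification the paper leaves implicit, but the core argument is identical.
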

\begin{proof}
	Using Proposition~\ref{prop:mme} and ~\ref{prop:Pdelta}, we have
	\begin{eqnarray*}
		\hat{\bs{\tau}} &=& \left(\begin{bmatrix}
			\bs{x}_i & \bs{X}_{[i]}^\tp 
		\end{bmatrix}  \begin{bmatrix}
			0 & \bs{0}^\tp \\
			\bs{0} & \bs{V}^{-1}_{[i]}
		\end{bmatrix} \begin{bmatrix}
			\bs{x}_i^\tp \\
			\bs{X}_{[i]}
		\end{bmatrix}\right)^-\begin{bmatrix}
			\bs{x}_i & \bs{X}_{[i]}^\tp 
		\end{bmatrix}\begin{bmatrix}
			0 & \bs{0}^\tp \\
			\bs{0} & \bs{V}^{-1}_{[i]}
		\end{bmatrix}\begin{bmatrix}
			y_i \\
			\bs{y}_{[i]}
		\end{bmatrix}\\
		&=& (\bs{X}_{[i]}^\tp\bs{V}_{[i]}^{-1}\bs{X}_{[i]})^{-}\bs{X}_{[i]}^\tp\bs{V}_{[i]}^{-1}\bs{y}_{[i]}.
	\end{eqnarray*} \\
	
\end{proof}


\bibliographystyle{plainnat} 
\bibliography{ref} 

\end{document}